\newtheorem{theorem}{Theorem}
\theoremstyle{plain}
\newtheorem{axiom}{Axiom}
\newtheorem{definition}{Definition}
\numberwithin{equation}{section}
\begin{document}
\title[Relative power]{The relative power and its invariance}
\author{Paolo Maria Mariano}
\address{DICeA, University of Florence, via Santa Marta 3, I-50139 Firenze
(Italy)}
\email{paolo.mariano@unifi.it}
\subjclass{74A99, 74A10}
\keywords{Relative power, invariance, mutating bodies.}

\begin{abstract}
The relative power of actions in a Cauchy body suffering mutations due to
defect evolution is introduced. It is shown that its invariance under the
action of the Euclidean group over the ambient space and the material space
allows one to obtain (\emph{i}) the balance of standard and configurational
actions and (\emph{ii}) the identification of configurational ingredients
from a unique source.
\end{abstract}

\maketitle

\section{ \ \ }

Actions driving the evolution of defects in materials are commonly called
configurational because they are associated with processes mutating the
material structure of a body in a way that can be represented through
alterations of the reference macroscopic configuration. It seems that the
term `configurational' should be attributed to Nabarro (see remarks in \cite%
{Ericksen}). In a pioneering paper \cite{Eshelby}, Eshelby observed that, in
simple elastic bodies undergoing large deformations, the equations obtained
by means of horizontal variations of the bulk elastic energy -- they are the
variations generated by altering the reference place by using appropriate
diffeomorphisms -- are associated with the equilibrium of defects with
non-vanishing volume. The irreversible evolution of these defects is also
described through the introduction of peculiar driving forces. The analysis
of the evolution of point, line and surface defects (vacancies,
dislocations, interfaces, cracks) and the justification of the relevant
balances of the actions governing the equilibrium and the evolution have
been discussed largely in the subsequent literature. Various points of view
generated a burning debate about the nature of the balances of
configurational actions, governing equilibrium and possible evolution of
defects in simple and complex bodies. On one side it has been claimed that
the local configurational balance is just the projection through the inverse
motion of the Cauchy balance in terms of Piola-Kirchhoff stress, in absence
of dissipative driving forces \cite{MauTr, Mau, Mau95}. On another side the
fundamental independent nature of the balance of configurational forces has
been supported: such a balance has been postulated a priori in an abstract
way, then its (at the beginning) unspecified ingredients (Hamilton-Eshelby
stress and configurational bulk forces) have been identified in terms of
standard actions by means of an invariance requirement and the second law of
thermodynamics, the use of which presumes the assignment of the free energy 
\cite{GS, G, Gu}.

Further contributions to the debate are manifold\footnote{%
Examples are \cite{J, Menzel, Sam, Se, Sil, SB, Tru, AV, Ep, KD, St, St1,
St2} and references therein. Of course, I do not claim completeness of this
list which collects possible choices.}.

Essential differences between the balance of forces involving the first
Piola-Kirchhoff stress and the balance of configurational actions have been
pointed out indirectly by the results in an earlier paper, namely \cite%
{GMS-ARMA} (see also further remarks in \cite{GMS}), in the case of elastic
bodies undergoing large deformations. In fact, results in calculus of
variations help once more in addressing the discussion. Consider only the
balances in the bulk just for the sake of simplicity. For smooth minimizers
it is obviously true that, in absence of evolution governed by a driving
force, the configurational balance equations can be obtained by pulling back
in the reference place by means of the inverse motion the relevant balances
in terms of standard Piola-Kirchhoff stress. Different is the case of
irregular minimizers. They are common because existence theorems place
minimizers of the elastic energy in Sobolev spaces. Sobolev maps do not
admit always tangential derivatives. For this reason one cannot compute the
balance of forces in terms of Piola-Kirchhoff stress from the first
variation of the energy functional. The so-called horizontal variations are
admitted: they are variations which alter the reference place and lead to
the balance of configurational forces (at least the one not accounting for
driving force). Similar variations are also admissible on the actual place
of the body: under appropriate bounds for the derivatives of the energy (or
better of its policonvex representative) one finds the weak form of the
balance of forces in terms of Cauchy stress and proves also that such a
stress is locally $L^{1}$. Thus, for irregular minimizers the technique
based on the inverse motion mentioned above cannot be applied. The technique
based on horizontal variations has been later applied to various cases (see,
e.g., \cite{MauTr}).

In the ensuing sections, by restricting the attention to simple bodies, I
present a procedure based on $\mathbb{R}^{3}\ltimes SO\left( 3\right) $
invariance of a certain power that I call the \emph{relative power}. It
allows one to obtain (\emph{i}) the balance of both standard and
configurational actions and (\emph{ii}) the identification of
configurational ingredients from a unique source. The idea is based on the
definition of two virtual velocity fields $v$ and $w$ acting one over the
ambient space and the other over the space in which the material
configuration of the body is placed. The latter field is then pushed forward
on the ambient space, along the motion and the power performed by the
standard actions in the difference between $v$ and the image of $w$ is
evaluated. Such a power is supplemented by a power of the energy flux
generated by the possible disarrangements and permutation of defects that
are determined by the action of $w$ in a material space, a space endowed
with hits own energy. The sum of all these contributions is exactly the
functional that I call the relative power. Its definition is not exotic and
is not different in essence from the one of standard power. It reduces to
the standard expression of the power when the reference place is fixed once
and for all as it happens in standard continuum mechanics.

Neither surface and line defects, nor material complexity are accounted for.
They are matter of future work. Here the attention is focused only on the
basic skeletal idea.

Peculiarities of the ensuing developments are summarized below.

\begin{enumerate}
\item \emph{Use of the inverse motion is not required.}

\item \emph{No integral configurational balance is postulated.}

\item \emph{The integral balance of configurational forces and a
configurational balance of torques are derived and correspond to Killing
fields of the metric in the material space.}

\item \emph{The existence of a free energy density is postulated but the
list of state variables entering its constitutive structure is not specified
to a certain extent.}

\item \emph{No use is made of the mechanical dissipation inequality to
identify the purely mechanical part of configurational forces. In fact, the
identification follows directly from the procedure.}

\item \emph{The procedure does not require a variational structure and holds
in dissipative setting.}

\item \emph{A balance arising by the requirement of invariance of the
relative power under changes in observers corresponds in purely conservative
case to an integral version of N\"{o}ther theorem.}
\end{enumerate}

Differences and analogies with the two different points of view analyzed in 
\cite{Mau95} and \ \cite{G} (and developed in subsequent papers) are further
discussed in the last section.

\section{ \ \ \ }

The description of the standard kinematics of simple continuous deformable
bodies is so well known that it barely needs to be retold. The setting is
the classical space-time. A fit region $\mathcal{B}$ (more simply, an open,
connected set with Lipschitz boundary) of the standard ambient space $%
\mathbb{R}^{3}$ receives a body in its reference place. Each ensuing
configuration is reached in an isomorphic copy of $\mathbb{R}^{3}$,
indicated by $\mathbb{\hat{R}}^{3}$, by means of a transplacement, an
orientation preserving diffeomorphism $x\longmapsto y:=y\left( x\right) \in 
\mathbb{\hat{R}}^{3}$. The set $\mathcal{B}_{a}:=y\left( \mathcal{B}\right) $
is then the actual configuration (placement) of the body. The spatial
derivative of $x\longmapsto y$ is indicated by $F:=Dy\left( x\right) \in
Hom\left( T_{x}\mathcal{B},T_{y\left( x\right) }\mathcal{B}_{a}\right) $.
The positivity of the determinant of $F$ at each $x$ from $\mathcal{B}$,
i.e. $\det F>0$, is implied by the assumption that the generic
transplacement be orientation preserving. The additional requirement%
\begin{equation*}
\int_{\mathcal{B}}\tilde{f}\left( x,y\left( x\right) \right) \det Dy\left(
x\right) \text{ }dx\leq \int_{\mathbb{\hat{R}}^{3}}\sup_{x\in \mathcal{B}}%
\tilde{f}\left( x,z\right) \text{ }dz
\end{equation*}%
for all $\tilde{f}\in C_{0}^{\infty }(\mathcal{B}\times \mathbb{\hat{R}}%
^{3}) $\ is a global one-to-one condition allowing frictionless self-contact
of the boundary while still preventing self-penetration (see \cite{GMS}).

In representing motions, time come into play and one has%
\begin{equation*}
\left( x,t\right) \longmapsto y:=y\left( x,t\right) \in \mathbb{\hat{R}}^{3},%
\text{ \ \ }x\in \mathcal{B},\text{ \ }t\in \left[ 0,\bar{t}\right] ,
\end{equation*}%
with a presumption of sufficient smoothness in time, so that the velocity
field is defined by%
\begin{equation*}
\left( x,t\right) \longmapsto \dot{y}=\frac{d}{dt}y\left( x,t\right) \in 
\mathbb{\hat{R}}^{3},
\end{equation*}%
in the reference configuration.

Every subset $\mathfrak{b}$\ from $\mathcal{B}$ with non-vanishing `volume'
measure and the same regularity of $\mathcal{B}$ itself is called a \emph{%
part}. The set $\mathfrak{P}\left( \mathcal{B}\right) $ of all parts of $%
\mathcal{B}$ is an algebra with respect to the operations of meet and join
(see \cite{CV}).

Virtual velocity fields are defined over the ambient space and the reference
places:%
\begin{equation*}
x\in \mathcal{B},\text{ \ }t\in \left[ 0,\bar{t}\right] ,\text{ \ \ }\left(
x,t\right) \longmapsto v:=v\left( x,t\right) \in \mathbb{\hat{R}}^{3},\text{
\ \ }\left( x,t\right) \longmapsto w:=w\left( x,t\right) \in \mathbb{R}^{3},
\end{equation*}%
They are assumed to be differentiable in space at every instant. The symbols 
$V_{v}$ and $V_{w}$ denote the functional spaces containing them. Elements
from $V_{v}$\ and $V_{w}$\ can be considered as virtual velocity fields over
the body.

In the previous picture, the generic material element is collapsed just in a
point which is its sole morphological descriptor. I use to call Cauchy
bodies those bodies for which the minimalist approach summarized above is
sufficient to represent the main essential peculiarities of their
morphology, the representation of actions is then conjugated in terms of
power. Different is the case of complex bodies for which descriptors of the
material substructure selected in a differentiable manifold are included in
the representation of the morphology of the generic material element%
\footnote{%
See \cite{C, CV, dFM, Mar}, \cite{Mar} and references therein.}.

\section{ \ \ }

An observer is a representation of all geometrical environments that are
necessary to describe the morphology of a body and its motion\footnote{%
Such a definition has non-trivial consequences above all in the mechanics of
complex bodies, rather than in the one of simple bodies (see references in
footnote 2).}. Here an observer is then a triple of atlas, one over the
ambient space $\mathbb{\hat{R}}^{3}$, one over the material space containing 
$\mathcal{B}$ and the last one over the time interval. Changes in observers
are then changes in these atlas, governed by the relevant groups of
diffeomorphisms. In particular I consider synchronous isometric changes in
observers. Synchronicity means that the representation of the time interval
is left invariant.

By indicating by $v^{\ast }$ the pull back in the first observer of the rate
measured by the second observer, the action of the semi-direct product $%
\mathbb{\hat{R}}^{3}\ltimes SO\left( 3\right) $ over the ambient space $%
\mathbb{\hat{R}}^{3}$ gives rise to the standard formula%
\begin{equation*}
v^{\ast }=\hat{c}\left( t\right) +\hat{q}\left( t\right) \times \left(
y-y_{0}\right) +v,
\end{equation*}%
where $y_{0}$ is an arbitrary point in $\mathbb{\hat{R}}^{3}$, $\hat{c}%
\left( t\right) \in \mathbb{\hat{R}}^{3}$ and $\hat{q}\left( t\right) \times
\in so\left( 3\right) $, with $so\left( 3\right) $\ the Lie algebra of $%
SO\left( 3\right) $. In standard approaches, it is then assumed that all
observers evaluate the same $\mathcal{B}$.

Here the assumption is removed and the independent action of the semi-direct
product $\mathbb{R}^{3}\ltimes SO_{\diamond }\left( 3\right) $, with $%
SO_{\diamond }\left( 3\right) $ a copy isomorphic to $SO\left( 3\right) $,
over $\mathbb{R}^{3}$ is considered. It leads to the formula\footnote{%
Such a point of view has been recently also discussed in \cite{Mur} (see
also reference therein) for different purposes. It is also used in \cite{Gu}
with strict reference to the derivation of configurational balances. In
fact, a requirement of invariance of an expression of a power with respect
to such changes in observers is called upon. The power selected involves a
number of configurational actions (stresses, cinternal and external bulk
forces and couples) in an abstract way, without discussing at that stage
their possible expression in terms of standard actions. This point is
further analized in the last section.}%
\begin{equation*}
w^{\ast }=c\left( t\right) +q\left( t\right) \wedge \left( x-x_{0}\right) +w,
\end{equation*}%
where $x_{0}$ is an arbitrary point in $\mathbb{R}^{3}$, $c\left( t\right)
\in \mathbb{R}^{3}$ and $q\left( t\right) \times \in so_{\diamond }\left(
3\right) $, with $so_{\diamond }\left( 3\right) $\ the Lie algebra of $%
SO_{\diamond }\left( 3\right) $. The transformation $w\longmapsto w^{\ast }$
can be also considered as an isometric shift superposed to a generic
relabeling in the `material space' with infinitesimal generator $w$.

\section{ \ \ \ }

Surface and bulk actions are associated with (generated by) relative changes
of places between neighboring material elements: at every $x\in \mathcal{B}$
they are represented respectively by the first Piola-Kirchhoff stress $P\in
Hom(T_{x}^{\ast }\mathcal{B},T_{y\left( x\right) }^{\ast }\mathcal{B}%
_{a})\simeq \mathbb{\hat{R}}^{3\ast }\otimes \mathbb{R}^{3}$ and the vector
of bulk forces $b\in \mathbb{\hat{R}}^{3\ast }$ which includes inertial
actions when they are present.

The standard power of external actions over a generic part $\mathfrak{b}$ is
given by the expression%
\begin{equation*}
\mathcal{P}_{\mathfrak{b}}^{ext}\left( \dot{y}\right) :=\int_{\mathfrak{b}%
}b\cdot \dot{y}\text{ }dx+\int_{\partial \mathfrak{b}}Pn\cdot \dot{y}\text{ }%
d\mathcal{H}^{2}.
\end{equation*}%
Note that this expression is usually written by imagining that the reference
place does not undergo mutations. The requirement of invariance of $\mathcal{%
P}_{\mathfrak{b}}^{ext}\left( \dot{y}\right) $, under changes in observers
leaving invariant $\mathcal{B}$\ and altering isometrically the ambient
space, furnishes integral and then pointwise balance equations \cite{N}.

Here the point of view is different: the body can mutate its material
structure. The world `mutation' needs mechanical definition. I do not
consider any specific mechanism of mutation. Rather, I account for the
indirect effects of classes of mutations: energy fluxes in the material,
bulk driving forces and configurational couples. All these ingredients are
pictured in $\mathcal{B}$. They can be considered as due to the
rearrangements of possible inhomogeneities, their possible evolution and/or
to more general alterations of the material structure that can be pictured
through mutations of the reference placement $\mathcal{B}$. An extended
notion of power is then required. I call it a \emph{relative power}: it is
the power of standard actions evaluated on the velocity relative to the
rates of mutations in the reference place, supplemented by the energy fluxes
and the power of driving forces and configurational couples. The definition
of the relative power is presented after necessary ensuing preliminaries.

A free energy density $e$ is defined over $\mathcal{B}$; it is function of
the state $\varsigma $, the place $x$ and the time $t$, namely%
\begin{equation*}
e:=e\left( x,t;\varsigma \right) .
\end{equation*}

The state $\varsigma $\ of a material element is not specified here. The
explicit (direct) dependence on $x$ underlines the assumption that the
material is not homogeneous.\ The explicit dependence on time may describes
only some aspects of possible mutations, for example aging. In fact, for
elastic bodies with time-dependent moduli, the Clausius-Duhem inequality in
its isothermal version implies $\partial _{t}e\leq 0$ which corresponds
exactly to aging in time. In what follows the derivative $\partial _{x}e$
can be considered as the explicit derivative with respect to $x$, holding
fixed the state.

For the sake of simplicity, I do not consider below the explicit dependence
on time, so that from now on the free energy depends on the place $x$ and
the state $\varsigma $.

Standard tractions and bulk forces arise during a generic motion. They are
power-conjugated with the rate of changes of (relative) places of material
elements. They contribute to the equilibrium of defects and their evolution.
In presence of evolving structural \emph{mutations in the bulk},
annihilation and creation of material bonds occur. Bulk actions are then
power-conjugated with mechanisms of annihilation and creation (or
restoration) of material bonds. A bulk force $f$\ is then associated with
evolving mutations: it is the so-called driving force. Effects of anisotropy
in the distribution of mutations and anisotropies induced by the
`permutations' of defects in the bulk are accounted for by body couples $\mu 
$. By definition $\mu =0$ when anisotropies are absent in the material
space.\ Driving forces are introduced and justified variously in the
literature (see \cite{AK}). Configurational bulk couples have been
introduced in \cite{Gu} with the same meaning adopted here. Both $f$ and $%
\mu $ are described by co-vectors over $\mathcal{B}$\ because they are
associated with mechanisms mutating $\mathcal{B}$ itself. No configurational
traction associated with a primitive configurational stress is presumed a
priori: it is found later as a derived `object'.

Inertia is neglected here for the sake of simplicity. It can be included by
considering the bulk forces decomposed additively into inertial and
non-inertial parts and `adding' to $e$ the kinetic energy.

\begin{definition}
For Cauchy bodies a linear functional $\mathcal{P}^{rel}:\mathfrak{P}\left( 
\mathcal{B}\right) \times V_{v}\times V_{w}\rightarrow \mathbb{R}$ is called
a \textbf{relative power} when it is additive over disjoint parts, is linear
over the space of rates and admits the explicit expression%
\begin{equation*}
\mathcal{P}_{\mathfrak{b}}^{rel}\left( v,w\right) :=\mathcal{P}_{\mathfrak{b}%
}^{rel-a}\left( v,w\right) +\mathcal{P}_{\mathfrak{b}}^{dis}\left( v,w\right)
\end{equation*}%
with%
\begin{equation*}
\mathcal{P}_{\mathfrak{b}}^{rel-a}\left( v,w\right) :=\int_{\mathfrak{b}%
}b\cdot \left( v-Fw\right) \text{ }dx+\int_{\partial \mathfrak{b}}Pn\cdot
\left( v-Fw\right) \text{ }d\mathcal{H}^{2},
\end{equation*}%
\begin{eqnarray*}
\mathcal{P}_{\mathfrak{b}}^{dis}\left( v,w\right) &:&=\int_{\partial 
\mathfrak{b}}\left( n\cdot w\right) e\text{ }d\mathcal{H}^{2}+\int_{%
\mathfrak{b}}\left( \partial _{x}e-f\right) \cdot \left( w-\func{curl}%
w\times \left( x-x_{0}\right) \right) \text{ }dx+ \\
&&+\int_{\mathfrak{b}}\mu \cdot \func{curl}w\text{ }dx
\end{eqnarray*}
\end{definition}

I call $\mathcal{P}_{\mathfrak{b}}^{rel-a}\left( v,w\right) $ the \emph{%
relative power of actions} and $\mathcal{P}_{\mathfrak{b}}^{dis}\left(
v,w\right) $ the\emph{\ power due to disarrangements}.

\begin{enumerate}
\item The power of actions is said to be relative because it is developed
along the difference between the actual velocity and the push forward of the
material velocity $w$ in $\mathcal{B}_{a}$.

\item More difficult is the interpretation of the terms in $\mathcal{P}_{%
\mathfrak{b}}^{dis}\left( v,w\right) $. Recall that the velocity field $%
\left( x,t\right) \longmapsto v$ moves just points in space where no
material elements are necessarily placed. The one-parameter group of
diffeomorphism associated with the field $\left( x,t\right) \longmapsto w$
alters the distribution of the material elements, even permuting them in a
virtual way (it has the same role of the relabeling in calculus of
variation). A flux of energy through the boundary $\partial \mathfrak{b}$
appears. Moreover, the distribution of energy in space can be in principle
inhomogeneous. As mentioned above, both $x\longmapsto f\left( x\right) $ and 
$x\longmapsto \mu \left( x\right) $ are co-vectors fields over $\mathcal{B}$%
, thus material interactions power conjugated with the rate of change of the
inhomogeneities. Contrary, all standard forces are co-vectors over $\mathcal{%
B}_{a}$. Remind that couples $\mu $ are associated with the anisotropy
induced by the material mutations, including the permutation of defects. The
presence of mutations allow also one to include in the scenario the driving
force $f$. All effects associated with anisotropies both in the evolving
mutations and in the distribution of the energy as a consequence of the
relabeling are all included in $\mu $. Such a remark justifies the term $%
\left( w-\func{curl}w\times \left( x-x_{0}\right) \right) $. The negative
sign before $f$ is there only for the sake of convenience.
\end{enumerate}

Take note that $v$ may coincide with the true velocity $\dot{y}$ at $x$ and $%
t$.

Further physical justifications of Definition 1 are presented later.

\begin{axiom}
$\mathcal{P}_{\mathfrak{b}}^{rel}\left( v,w\right) $ is invariant under
isometric changes in observers.
\end{axiom}

All observers `measure' the same value of the power which is a scalar. The
axiom is not different in intrinsic meaning from the axiom of invariance of
the standard power \cite{N}. Differences in the expression of the power are
dictated only by the situation under scrutiny. Consequences are summarized
in the ensuing theorem which is the main result of this paper.

\begin{theorem}
$(i)$ If the fields $x\longmapsto b:=b\left( x\right) $ and $x\longmapsto
P:=P\left( x\right) $ are integrable over $\mathcal{B}$, then for every part 
$\mathfrak{b}$ the following integral balances hold:%
\begin{equation*}
\int_{\mathfrak{b}}b\text{ }dx+\int_{\partial \mathfrak{b}}Pn\text{ }d%
\mathcal{H}^{2}=0,
\end{equation*}%
\begin{equation*}
\int_{\mathfrak{b}}\left( y-y_{0}\right) \times b\text{ }dx+\int_{\partial 
\mathfrak{b}}\left( y-y_{0}\right) \times Pn\text{ }d\mathcal{H}^{2}=0,
\end{equation*}%
\begin{equation*}
\int_{\partial \mathfrak{b}}\mathbb{P}n\text{ }d\mathcal{H}^{2}-\int_{%
\mathfrak{b}}F^{\ast }b\text{ }dx+\int_{\mathfrak{b}}\left( \partial
_{x}e-f\right) \text{ }dx=0,
\end{equation*}%
\begin{equation*}
\int_{\partial \mathfrak{b}}\left( x-x_{0}\right) \times \mathbb{P}n\text{ }d%
\mathcal{H}^{2}-\int_{\mathfrak{b}}\left( x-x_{0}\right) \times F^{\ast }b%
\text{ }dx+\int_{\mathfrak{b}}\mu \text{ }dx=0.
\end{equation*}%
where, with $I$ the second order unit tensor,%
\begin{equation*}
\mathbb{P}:=eI-F^{\ast }P.
\end{equation*}%
$(ii)$ If the fields $x\longmapsto P$ and $x\longmapsto \mathbb{P}$ are of
class $C^{1}\left( \mathcal{B}\right) \cap C^{0}\left( \mathcal{\bar{B}}%
\right) $ then%
\begin{equation*}
DivP+b=0,
\end{equation*}%
\begin{equation*}
SkwPF^{\ast }=0,
\end{equation*}%
\begin{equation*}
Div\mathbb{P}-F^{\ast }b+\partial _{x}e=f.
\end{equation*}%
\begin{equation*}
2Skw\mathbb{P}=\mu \times
\end{equation*}%
$(iii)$ If the material is homogeneous and no driving force is present, when
the material distribution of defects and energy is isotropic in $\mathcal{B}$%
, then $\mathbb{P}$ is symmetric and, in absence of body forces, the integral%
\begin{equation*}
\int_{\partial \mathfrak{b}}\mathbb{P}n\text{ }d\mathcal{H}^{2}
\end{equation*}%
is `surface' independent. $(iv)$ An extended version of the virtual power
principle holds:%
\begin{equation*}
\mathcal{P}_{\mathfrak{b}}^{rel}\left( v,w\right) =\mathcal{P}_{\mathfrak{b}%
}^{rel-inn}\left( v,w\right) ,
\end{equation*}%
where%
\begin{equation*}
\mathcal{P}_{\mathfrak{b}}^{rel-inn}\left( v,w\right) :=\int_{\mathfrak{b}%
}\left( P\cdot \nabla v+\mathbb{P}\cdot \nabla w-\left( x-x_{0}\right)
\otimes \left( \partial _{x}e-f\right) \cdot Skw\nabla w+\mu \cdot \func{curl%
}w\right) \text{ }dx;
\end{equation*}%
it reduces to 
\begin{equation*}
\int_{\mathfrak{b}}b\cdot \left( v-Fw\right) \text{ }dx+\int_{\partial 
\mathfrak{b}}Pn\cdot \left( v-Fw\right) \text{ }d\mathcal{H}%
^{2}+\int_{\partial \mathfrak{b}}\left( n\cdot w\right) e\text{ }d\mathcal{H}%
^{2}+\int_{\mathfrak{b}}\left( \partial _{x}e-f\right) \cdot w\text{ }dx=
\end{equation*}%
\begin{equation*}
=\int_{\mathfrak{b}}P\cdot \left( \nabla v-F\nabla w\right) \text{ }dx.
\end{equation*}
\end{theorem}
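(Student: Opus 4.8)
The plan is to turn the scalar‑invariance Axiom into integral identities by exploiting that, once the placement and the fields $e,b,P,f,\mu$ are frozen, each of $v\longmapsto\mathcal{P}^{rel}_{\mathfrak{b}}(v,w)$ and $w\longmapsto\mathcal{P}^{rel}_{\mathfrak{b}}(v,w)$ is linear with no constant term. Under a synchronous isometric change in observers the rates are shifted, $v\longmapsto v+\hat c+\hat q\times(y-y_0)$ and $w\longmapsto w+c+q\times(x-x_0)$, with $\hat c,\hat q,c,q$ arbitrary constant vectors; the added fields are exactly the Killing fields of the Euclidean metrics of the ambient and of the material space. By linearity the Axiom is therefore equivalent to requiring
\[
\mathcal{P}^{rel}_{\mathfrak{b}}\big(\hat c+\hat q\times(y-y_0),\; c+q\times(x-x_0)\big)=0
\]
for every part $\mathfrak{b}$ and every choice of the constants, and $(i)$ is read off by activating one generator at a time.

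Switching on only $\hat c$ gives $\hat c\cdot\big(\int_{\mathfrak{b}}b\,dx+\int_{\partial\mathfrak{b}}Pn\,d\mathcal{H}^{2}\big)=0$, hence the first balance; switching on only $\hat q$ and using $a\cdot(\hat q\times r)=\hat q\cdot(r\times a)$ gives the second. Switching on only $c$, the contributions $b\cdot(-Fc)$, $Pn\cdot(-Fc)$, $(n\cdot c)e$ and $(\partial_x e-f)\cdot c$ collapse --- via $b\cdot Fc=(F^{\ast}b)\cdot c$ and $e\,(n\cdot c)=\big((eI)n\big)\cdot c$ --- into $c\cdot\big(\int_{\partial\mathfrak{b}}(eI-F^{\ast}P)n\,d\mathcal{H}^{2}-\int_{\mathfrak{b}}F^{\ast}b\,dx+\int_{\mathfrak{b}}(\partial_x e-f)\,dx\big)=0$, i.e. the third balance once $\mathbb{P}:=eI-F^{\ast}P$ is named. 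Switching on only $q$ is the delicate case, and the step I expect to be the main obstacle: one must evaluate $\operatorname{curl}$ on the material rigid field $q\times(x-x_0)$, insert it both in the correction term $w-\operatorname{curl}w\times(x-x_0)$ and in $\mu\cdot\operatorname{curl}w$, and then, using $a\cdot(q\times r)=q\cdot(r\times a)$ and $r\times(eI-F^{\ast}P)n=r\times\mathbb{P}n$, gather everything as $q\cdot(\cdots)=0$; arbitrariness of $q$ yields the fourth, configurational‑torque balance. It is here that the $\operatorname{curl}$‑correction built into $\mathcal{P}^{dis}_{\mathfrak{b}}$ is designed to act, and the cross‑product and axial‑vector bookkeeping must be carried out carefully so that $\mu$ survives with exactly the stated coefficient and no spurious moment appears.

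For $(ii)$, under the assumed $C^{1}(\mathcal{B})\cap C^{0}(\mathcal{\bar{B}})$ regularity I would apply the Gauss--Green theorem to the surface integrals in the first and third balances to recast them as volume integrals, and then invoke localization (arbitrariness of $\mathfrak{b}$, the fundamental lemma of the calculus of variations) to get $\operatorname{Div}P+b=0$ and $\operatorname{Div}\mathbb{P}-F^{\ast}b+\partial_x e=f$. For the two algebraic relations I would not localize the torque balances in isolation but substitute the freshly obtained pointwise force balances into the localized forms of the second and fourth balances, exactly as in Cauchy's theorem: in the second balance the divergence term produced by Gauss--Green cancels the body force through $\operatorname{Div}P=-b$, leaving the algebraic term whose vanishing expresses $\operatorname{Skw}PF^{\ast}=0$; the fourth is treated the same way to produce $2\operatorname{Skw}\mathbb{P}=\mu\times$, the couple $\mu$ collecting --- consistently with the modeling remarks preceding the Definition --- the anisotropy contributions of the evolving mutations and of the spatial distribution of the energy. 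Part $(iii)$ is then a specialization of $(ii)$: homogeneity gives $\partial_x e=0$, absence of driving force gives $f=0$, isotropy of the material distribution gives $\mu=0$, whence $\operatorname{Skw}\mathbb{P}=0$; and with $b=0$ the third balance reduces to $\operatorname{Div}\mathbb{P}=0$, so $\int_{\partial\mathfrak{b}}\mathbb{P}n\,d\mathcal{H}^{2}$ is the flux of a divergence‑free field and hence depends only on the region enclosed, not on the bounding surface.

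Finally, $(iv)$ is obtained by running Gauss--Green the other way, starting from the defining expression of $\mathcal{P}^{rel}_{\mathfrak{b}}$. The key algebraic observation is that, because $\mathbb{P}=eI-F^{\ast}P$, the boundary terms $-(Pn)\cdot(Fw)$ and $(n\cdot w)e$ combine into $(\mathbb{P}n)\cdot w$; applying the divergence theorem to $(Pn)\cdot v$ and to $(\mathbb{P}n)\cdot w$ produces $\int_{\mathfrak{b}}\big((\operatorname{Div}P)\cdot v+P\cdot\nabla v+(\operatorname{Div}\mathbb{P})\cdot w+\mathbb{P}\cdot\nabla w\big)\,dx$ besides the unchanged bulk contributions, and inserting the pointwise balances of $(ii)$ to eliminate $\operatorname{Div}P$, $\operatorname{Div}\mathbb{P}$ and $F^{\ast}b$ leaves exactly $\mathcal{P}^{rel-inn}_{\mathfrak{b}}(v,w)$, the term $-(x-x_0)\otimes(\partial_x e-f)\cdot\operatorname{Skw}\nabla w$ being the rewriting of the $\operatorname{curl}$‑correction already present in $\mathcal{P}^{dis}_{\mathfrak{b}}$. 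The concluding ``reduces to'' identity is the same computation truncated one step earlier --- before that rewriting, and with $\mu\cdot\operatorname{curl}w$ and the correction term kept explicit --- together with the algebraic relations of $(ii)$; no further appeal to the torque balances is needed.
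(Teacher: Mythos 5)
Your overall strategy is the paper's own: by joint linearity of $\mathcal{P}^{rel}_{\mathfrak{b}}$ in $(v,w)$, invariance is equivalent to the vanishing of the functional on the Euclidean generators, arbitrariness of $\hat c,\hat q,c,q$ gives the four integral balances with $\mathbb{P}:=eI-F^{\ast}P$, Gauss' theorem plus localization gives the pointwise statements, and running the divergence theorem backwards with the pointwise balances gives $(iv)$; the paper's proof is nothing more than the display of that invariance identity. The $\hat c$, $\hat q$ and $c$ computations you spell out are correct and complete.

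The genuine gap is exactly the step you flag and then leave undone: the $q$-generator. It is not mere bookkeeping. For the material rigid field $w=q\wedge(x-x_{0})$ one has, with the usual convention, $\operatorname{curl}w=2q$, so the correction term evaluates to $w-\operatorname{curl}w\times(x-x_{0})=-q\wedge(x-x_{0})$ rather than zero, and the resulting coefficient of $q$ is
\begin{equation*}
\int_{\partial\mathfrak{b}}\left(x-x_{0}\right)\times\mathbb{P}n\;d\mathcal{H}^{2}-\int_{\mathfrak{b}}\left(x-x_{0}\right)\times F^{\ast}b\;dx-\int_{\mathfrak{b}}\left(x-x_{0}\right)\times\left(\partial_{x}e-f\right)dx+2\int_{\mathfrak{b}}\mu\;dx,
\end{equation*}
i.e.\ a factor $2$ on $\mu$ and a ``spurious'' moment of $\partial_{x}e-f$, which is not the stated fourth balance. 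The statement is recovered only if $\operatorname{curl}w$ is understood as the axial vector of $\mathrm{Skw}\nabla w$ (half the usual curl), in which case the correction term annihilates material rigid rotations and $\mu$ enters with coefficient one; the same convention is what makes the term $-\left(x-x_{0}\right)\otimes\left(\partial_{x}e-f\right)\cdot \mathrm{Skw}\nabla w$ in $\mathcal{P}^{rel-inn}_{\mathfrak{b}}$ and the relation $2\,\mathrm{Skw}\,\mathbb{P}=\mu\times$ come out with the stated coefficients when, as you propose, one substitutes the pointwise force balances into the localized torque balances. Since your proof hinges on ``$\mu$ surviving with exactly the stated coefficient and no spurious moment appearing,'' you must actually carry out this computation and fix the curl convention; as written, the naive reading of $\operatorname{curl}$ makes the step fail. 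The remainder of your argument (localization for $(ii)$, the specialization $\partial_{x}e=0$, $f=0$, $\mu=0$, $b=0$ for $(iii)$, and the combination $-(Pn)\cdot Fw+(n\cdot w)e=(\mathbb{P}n)\cdot w$ followed by the divergence theorem for $(iv)$) matches the paper's intent.
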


\begin{proof}
The axiom of invariance and some elementary algebra impose that%
\begin{equation*}
\hat{c}\cdot (\int_{\mathfrak{b}}b\text{ }dx+\int_{\partial \mathfrak{b}}Pn%
\text{ }d\mathcal{H}^{2})+
\end{equation*}%
\begin{equation*}
+\hat{q}\cdot (\int_{\mathfrak{b}}\left( y-y_{0}\right) \times b\text{ }%
dx+\int_{\partial \mathfrak{b}}\left( y-y_{0}\right) \times Pn\text{ }d%
\mathcal{H}^{2})+
\end{equation*}%
\begin{equation*}
+c\cdot (\int_{\partial \mathfrak{b}}\left( eI-F^{\ast }P\right) n\text{ }d%
\mathcal{H}^{2}-\int_{\mathfrak{b}}F^{\ast }b\text{ }dx+\int_{\mathfrak{b}%
}\left( \partial _{x}e-f\right) \text{ }dx)+
\end{equation*}%
\begin{equation*}
+q\cdot (\int_{\partial \mathfrak{b}}\left( x-x_{0}\right) \times \mathbb{P}n%
\text{ }d\mathcal{H}^{2}-\int_{\mathfrak{b}}\left( x-x_{0}\right) \times
F^{\ast }b\text{ }dx+\int_{\mathfrak{b}}\mu \text{ }dx)=0.
\end{equation*}%
The arbitrariness of $c$, $q$, $\hat{c}$ and $\hat{q}$ implies the integral
balances in Theorem 1, once one defines $\mathbb{P}:=eI-F^{\ast }P$. The
poin-twise balances follow by the application of Gauss theorem. They imply
the equality between $\mathcal{P}_{\mathfrak{b}}^{rel}\left( v,w\right) $
and $\mathcal{P}_{\mathfrak{b}}^{rel-inn}\left( v,w\right) $. The last
statement of the theorem then follows straight away. Remind that $\mu =0$
when anisotropies are absent in $\mathcal{B}$.
\end{proof}

In the earlier theorem, $\mathbb{P}$ is called the Hamilton-Eshelby stress.
As recalled in the preamble, the word `configurational' is attributed to
balances involving it. Besides its immediateness, the earlier theorem has
some stringent theoretical consequences, as anticipated in the preamble.

\begin{enumerate}
\item To obtain the balance of configurational forces it is not necessary to
make use of the procedure exploiting the inverse motion.

\item The Hamilton-Eshelby stress $\mathbb{P}:=eI-F^{\ast }P$\ and the bulk
actions $-F^{\ast }b$\ and $\partial _{x}e$\ are not introduced a priori as
unknown objects and then identified with a procedure discussed further in
the last section.

\item A version of the principle of virtual power different from usual
arises, it includes the standard one when the reference place is considered
invariant, invariance intended in the sense of absence of evolving defects.
\end{enumerate}

The result can be extended to the case of complex bodies and to the case in
which structured discontinuity surfaces and line defects are present. In the
process, appropriate additional measures of interactions need to be
introduced as objects power conjugated with the rate of change of the
morphological descriptors of the substructure in the material elements (in
the case of complex bodies) and/or deformations and evolution of surface and
line defects. Even in that cases the procedure avoids the specification of
the constitutive structure of the local state and the use to the mechanical
dissipation inequality to identify the expression of the configurational
forces in terms of standard measures of interaction.

\section{ \ \ }

To explain further on the nature of the relative power, one may notice that
in purely conservative case the equation%
\begin{equation*}
\mathcal{P}_{\mathfrak{b}}^{rel}\left( v,w\right) =\mathcal{P}_{\mathfrak{b}%
}^{rel-inn}\left( v,w\right)
\end{equation*}%
reduces to an integral version of the pointwise balance appearing in N\"{o}%
ther theorem.

To prove such a statement consider a non-linear elastic inhomogeneous body
with total energy given by%
\begin{equation*}
e\left( x,F\right) +u\left( y\right) ,
\end{equation*}%
with $e$ the elastic potential -- a function which is polyconvex in the
gradient of deformation -- and $u$ the potential of body forces. Both $e$
and $u$ are assumed to be differentiable with respect to their arguments.
The essential ingredients preparing N\"{o}ther theorem need also to be
recalled briefly.

Consider smooth curves $s\longmapsto \mathbf{f}_{s}$ on the group of
diffeomorphisms $Diff(\mathbb{\hat{R}}^{3},\mathbb{\hat{R}}^{3})$ such that $%
\mathbf{f}_{0}=identity$ and at every point in $\mathbb{\hat{R}}^{3}$ one
gets $v=\frac{d}{ds}\mathbf{f}_{s}\left\vert _{s=0}\right. $, where the
field $y\longmapsto v\left( y\right) $ coincides with the virtual velocity
field introduced above over the ambient space.

The usual relabeling of the reference place is accounted for in $\mathbb{R}%
^{3}$. From a physical point of view it reduces just to the permutation of
inhomogeneities over $\mathcal{B}$. The relabeling is induced by the action
of the special group of diffeomorphisms $SDiff\left( \mathbb{R}^{3},\mathbb{R%
}^{3}\right) $, a group on which one selects smooth curves $s_{1}\longmapsto 
\mathbf{f}_{s_{1}}^{1}$ such that $\mathbf{f}_{0}^{1}=identity$ and at every 
$x$ one gets $w=\frac{d}{ds_{1}}\mathbf{f}_{s_{1}}\left\vert
_{s_{1}=0}\right. $, where the field $x\longmapsto w\left( x\right) $ is a
special case \ of the virtual velocity field $w$ introduced earlier, special
in the sense that it is isochoric.

Balance equations are obtained by requiring the minimum of the overall energy%
\begin{equation*}
\mathcal{E}\left( y\right) :=\int_{\mathcal{B}}\left( e\left( x,F\right)
+u\left( y\right) \right) \text{ }dx
\end{equation*}%
over an appropriate Sobolev space (commonly such a space is some $W^{1,p}$
or, more precisely, the space of weak diffeomorphisms discussed in \cite%
{GMS-ARMA, GMS}). In the case of $C^{1}$ minimizers, poin-twise
Euler-Lagrange equations can be derived in standard way. They read%
\begin{equation*}
b+DivP=0,
\end{equation*}%
where now $b:=-\partial _{y}u\in \mathbb{\hat{R}}^{3\ast }$ and $P:=\partial
_{F}e\in Hom\left( T_{x}^{\ast }\mathcal{B},T_{y\left( x\right) }^{\ast }%
\mathcal{B}_{a}\right) $. In the case of irregular minimizers one cannot
obtain the previous equation because Sobolev maps do not admit in general
tangential derivatives. The balance of configurational forces (discussed
previously) arises as a consequence of the evaluation of the horizontal
variations -- the ones generated by altering $\mathcal{B}$ through $%
s_{1}\longmapsto \mathbf{f}_{s_{1}}^{1}$. Moreover, one may obtain in
distributional sense the balance of forces in terms of Cauchy stress $\sigma
:=\left( \det F\right) ^{-1}\partial _{F}eF^{-\ast }\in \mathbb{\hat{R}}%
^{3\ast }\otimes \mathbb{\hat{R}}^{3\ast }$ and may prove also that $%
y\longmapsto \sigma $ belongs to $L_{loc}^{1}$ (see \cite{GMS}).

By focusing the attention for the sake of simplicity on the Euler-Lagrange
equations above, if one defines the vector density%
\begin{equation*}
\mathfrak{F}:=\left( e+u\right) w+\partial _{F}e^{\ast }\left( v-Fw\right) ,
\end{equation*}%
if the total energy is equivariant with respect to the action of $Diff(%
\mathbb{\hat{R}}^{3},\mathbb{\hat{R}}^{3})$ and $SDiff\left( \mathbb{R}^{3},%
\mathbb{R}^{3}\right) $, then (N\"{o}ther theorem, see e.g. \cite{KS, MH})%
\begin{equation*}
Div\mathfrak{F}=0.
\end{equation*}%
Equivariance means that%
\begin{equation*}
e\left( x,F\right) +u\left( y\right) =e(\mathbf{f}_{s_{1}}^{1}\left(
x\right) ,\left( grad\mathbf{f}_{s}\left( y\right) \right) F(\nabla \mathbf{f%
}_{s_{1}}^{1}\left( x\right) )^{-1})+u\left( \mathbf{f}_{s}\left( y\right)
\right) ,
\end{equation*}%
where $grad$ is the gradient with respect to $y$. The previous relation is
verified when (N\"{o}ther conditions)%
\begin{equation*}
\frac{d}{ds}(e(\mathbf{f}_{s_{1}}^{1}\left( x\right) ,\left( grad\mathbf{f}%
_{s}\left( y\right) \right) F(\nabla \mathbf{f}_{s_{1}}^{1}\left( x\right)
)^{-1})+u\left( \mathbf{f}_{s}\left( y\right) \right) )\left\vert
_{s=0}\right. =0,
\end{equation*}%
\begin{equation*}
\frac{d}{ds_{1}}(e(\mathbf{f}_{s_{1}}^{1}\left( x\right) ,\left( grad\mathbf{%
f}_{s}\left( y\right) \right) F(\nabla \mathbf{f}_{s_{1}}^{1}\left( x\right)
)^{-1})+u\left( \mathbf{f}_{s}\left( y\right) \right) )\left\vert
_{s_{1}=0}\right. =0.
\end{equation*}%
Such conditions read explicitly%
\begin{equation*}
\partial _{y}u\cdot v+\partial _{F}e\cdot \nabla v=0,
\end{equation*}%
\begin{equation*}
\partial _{x}e\cdot w-\partial _{F}e\cdot F\nabla w=0.
\end{equation*}%
By taking into account that (\emph{i}) in this case the field $x\longmapsto
w\left( x\right) $ is isochoric, namely $divw=0$, and (\emph{ii}) absence of
dissipative effects implies $f=0$, by considering also (\emph{iii}) the
explicit form of N\"{o}ther conditions and (\emph{iv}) the pointwise balance
of forces in terms of Piola-Kirchhoff stress, one realizes (after some
algebra) that the last relation in Theorem 1 reduces to the integral version
of N\"{o}ther theorem%
\begin{equation*}
\int_{\mathfrak{b}}\mathfrak{F}\cdot n=0
\end{equation*}%
on some arbitrary part $\mathfrak{b}$ of $\mathcal{B}$.

Conversely, one can say that the presence of a principle of relative power
is hidden in N\"{o}ther theorem. In fact, by starting from N\"{o}ther
theorem, I have already introduced in earlier papers, namely \cite{dFM, M,
Mar}, a version of the relative power including constitutive issues but also
surfaces and lines of discontinuity, without being conscious at that time of
its generality in non-conservative setting. This note, in fact, has
primarily the aim to stress this point.

\section{ \ \ }

The approach to configurational forces presented in \cite{Mau, Mau95, MauTr}
relies on constitutive assumptions. They are called upon only partially in
this paper: the sole assumption $e:=e\left( x,t;\varsigma \right) $ is
invoked without specifying the nature of the state $\varsigma $.

The comparison with the approach proposed in \cite{GS, G} (see also \cite{Gu}%
) requires a rather extended preliminary description. That approach is based
on two steps: (1) The balance of configurational forces is postulated first.
Such a postulate can be expressed trough the statement of an independent
integral balance (like in \cite{G}) or by requiring the invariance of a
certain power (a power which is different from the one used here) with
respect to the transformation $w\longmapsto w^{\ast }$ (like in \cite{Gu};
see also \cite{Se}). In \cite{G} and \cite{Gu}, independently of its origin,
the balance of configurational forces involves a configurational stress, say 
$\mathbb{P}$, and external and internal configurational bulk forces, say $%
\mathsf{\tilde{g}}$ and $\mathsf{\tilde{e}}$ respectively -- they are
different from the driving force $f$ and the configurational couple $\mu $
which play also a role in the treatment. The bulk configurational forces $%
\mathsf{\tilde{g}}$ and $\mathsf{\tilde{e}}$ are assumed to perform work
only after a Galilean change in observer (i.e. $w\longmapsto w+c$) at a
first glance (see \cite{Gu}, page 36). The point of view is then changed (%
\cite{Gu}, page 39) by saying that only $\mathsf{\tilde{g}}$ does not
perform power under time-dependent changes in reference is involved. Whether 
$\mathbb{P}$, $\mathsf{\tilde{g}}$ and $\mathsf{\tilde{e}}$ can be expressed
in terms of standard actions and energy is not known at this stage. The
identification of $\mathbb{P}$, $\mathsf{\tilde{g}}$ and $\mathsf{\tilde{e}}$
is matter of the second step. (2) It is essentially based on the
exploitation of the second law of thermodynamics written in terms of a
mechanical dissipation inequality in which only the power of the
configurational traction $\mathbb{P}n$ is added to the one of standard
actions. The mechanical dissipation inequality is written with respect to
control volumes with boundaries evolving in time. Invariance with respect to
the reparametrization of such boundaries leads to the identification $%
\mathbb{P}:=eI-F^{\ast }P$. Note that in the mechanical dissipation
inequality the energy is introduced. It is assumed also that $e$ is
differentiable with respect to time. The identification of $\mathsf{\tilde{e}%
}$ with $-F^{\ast }b$ follows directly from the insertion of the expression $%
eI-F^{\ast }P$ in the balance of configurational forces. The mechanical
dissipation inequality has to be exploited to recognize that $\mathsf{\tilde{%
g}}$\ coincides with $-\nabla e+P:\nabla F$. The additional specification of
the constitutive structure of the energy shows that $\mathsf{\tilde{g}}$
reduces finally to $-\partial _{x}e$ (see \cite{G}, page 78).

Comparison of the results in \cite{G}\ with the point of view in the
previous sections has to be done at the end of the identification procedure
(so that after step 2) because not only Theorem 1 collects the balance of
configurational forces but also it \emph{includes} the results of the
identification recalled above, at least in the setting discussed here.

To obtain Theorem 1 -- I stress once more -- no use is made of the
mechanical dissipation inequality. No use is made of an additional
requirement of invariance of the power with respect to the reparametrization
of $\partial \mathfrak{b}$. Although the energy is also introduced here, no
assumption of differentiability in time is necessary.

The state here is not specified: for example it can include $F$, the history
of deformation, a number of internal variables conjugated with affinities,
their histories and gradients. The sole restriction is that the state be
compatible with the relative power of actions. In fact, higher-order Cauchy
bodies (like, e.g., second-grade elastic bodies) or complex bodies require
expressions of $\mathcal{P}^{rel-a}$ involving hyperstresses or
microstresses and self-actions respectively.

I do not claim that the treatment proposed here is better than the ones
discussed in this section. My approach is just parallel in some sense. The
reader interested in foundational issues can find by himself/herself the
right position of this thin note, written by using elementary mathematics.

\ \ \ \ \ \ \ \ \ 

\textbf{Acknowledgements}. This work has been developed within the
activities of the research group in `Theoretical Mechanics' of the `Centro
di Ricerca Matematica Ennio De Giorgi' of the Scuola Normale Superiore at
Pisa. The support of the GNFM-INDAM is acknowledged.

\end{document}